\theoremstyle{plain}
\newtheorem{theorem}{Theorem}
\newtheorem{proposition}[theorem]{Proposition}
\newtheorem{lemma}[theorem]{Lemma}
\theoremstyle{definition}
\newtheorem{definition}[theorem]{Definition}
\theoremstyle{remark}
\newtheorem{remark}[theorem]{Remark}
\renewcommand{\leq}{\leqslant}
\renewcommand{\le}{\leqslant}
\renewcommand{\geq}{\geqslant}
\renewcommand{\ge}{\geqslant}
\newcommand{\A}{\mathbb{A}}
\newcommand{\F}{\mathbb{F}}
\newcommand{\K}{\mathbb{K}}
\renewcommand{\L}{\mathbb{L}}
\newcommand{\fq}{\F_{q}}
\newcommand{\fqm}{\F_{q^m}}
\newcommand{\word}[1]{\mathbf{#1}}
\newcommand{\av}{\word{a}}
\newcommand{\bv}{\word{b}}
\newcommand{\cv}{\word{c}}
\newcommand{\ev}{\word{e}}
\newcommand{\gv}{\word{g}}
\newcommand{\hv}{\word{h}}
\newcommand{\mv}{\word{m}}
\newcommand{\sv}{\word{s}}
\newcommand{\uv}{\word{u}}
\newcommand{\vv}{\word{v}}
\newcommand{\xv}{\word{x}}
\newcommand{\zv}{\word{z}}
\newcommand{\zz}{\word{0}}
\newcommand{\Kv}{\word{K}}
\newcommand{\mat}[1]{\mathbf{#1}}
\newcommand{\Am}{\mat{A}}
\newcommand{\Bm}{\mat{B}}
\newcommand{\Cm}{\mat{C}}
\newcommand{\Dm}{\mat{D}}
\newcommand{\Em}{\mat{E}}
\newcommand{\Gm}{\mat{G}}
\newcommand{\Km}{\mat{K}}
\newcommand{\Lm}{\mat{L}}
\newcommand{\Pm}{\mat{P}}
\newcommand{\Rm}{\mat{R}}
\newcommand{\Tm}{\mat{T}}
\newcommand{\Vm}{\mat{V}}
\newcommand{\ZZ}{\mat{0}}
\newcommand{\Gp}{\mat{G}_{\rm pub}}
\newcommand{\Cp}{\code{C}_{\rm pub}}
\newcommand{\tp}{t_{\rm pub}}
\newcommand{\gab}[2]{\CG_{#1}\left(#2\right)}
\newcommand{\code}[1]{\mathscr{#1}}
\newcommand{\dual}[1]{{#1}^\bot}
\newcommand{\pdual}[1]{\left(#1\right)^\bot}
\newcommand{\CA}{\code{A}}
\newcommand{\CB}{\code{B}}
\newcommand{\CC}{\code{C}}
\newcommand{\CG}{\code{G}}
\newcommand{\Cpub}{\CC_{\text{\rm pub}}}
\newcommand{\norm}[2]{\left | {#2}  \right |_{#1}}
\newcommand{\inpro}[2]{\left \langle #1 , #2  \right\rangle} 
\newcommand{\rank}{{\normalfont \texttt{rank}}}
\newcommand{\GL}{{\normalfont \textsf{GL}}}
\newcommand{\MS}[3]{\mathcal{M}_{#1,#2}\left(#3\right)}
\newcommand{\Trqm}{\mathbf{Tr}_{\fqm/ \fq}}
\newcommand{\Tr}{\mathbf{Tr}_{\L/\fqm}}
\title{Polynomial-Time Key Recovery Attack on the Faure-Loidreau Scheme based on Gabidulin Codes}
\author{Philippe Gaborit}
	\address{ Philippe Gaborit  is with
	XLIM-DMI, Universit\'e de Limoges, 123, 
		Avenue Albert Thomas, F-87060, Limoges Cedex, France.}
	\email{gaborit@unilim.fr}
\author{Ayoub Otmani}
	\address{Ayoub Otmani and Herv\'e Tal\'e Kalachi  are with the University of Rouen, 
 		UFR des Sciences et des Techniques,
  		BP 12, Avenue de l'Universit\'e,
  		F-76801 Saint-\'Etienne-du-Rouvray Cedex, France.}
	\email{ayoub.otmani@univ-rouen.fr}
\author{Herv\'e Tal\'e Kalachi} 
 	\address{Herv\'e Tal\'e Kalachi  is with
	the University of Yaounde 1, Department of Mathematics, ERAL, Cameroon.}
	\email{hervekalachi@gmail.com}
\begin{document}
\begin{abstract}
	Encryption schemes based on the rank metric lead to small public key sizes of order of few thousands bytes which represents a very 
attractive feature compared to Hamming metric-based encryption schemes where public key sizes are of order of hundreds of thousands 
bytes even with additional structures like the cyclicity.
The main tool for building public key encryption schemes in rank metric is the McEliece encryption setting used with the  family of Gabidulin codes. 
Since the original scheme proposed in 1991 by Gabidulin, Paramonov and Tretjakov, many systems have been proposed 
based on different masking techniques 
for Gabidulin codes. 
Nevertheless, over the years 
most of these systems were attacked essentially by the use of an attack proposed by Overbeck.

In 2005 Faure and Loidreau designed a rank-metric encryption scheme which was not in the McEliece setting. The scheme is very efficient, 
with small public keys of size a few kiloBytes and with security closely related to  the linearized polynomial reconstruction problem 
which corresponds to the decoding problem of Gabidulin codes. The structure of the scheme differs considerably from the classical McEliece 
setting and until our work, the scheme had never been attacked. We show in this article that 
for a range of parameters,
this scheme
is also  vulnerable to a polynomial-time attack that recovers the private key by applying Overbeck's attack on an 
appropriate public code. As an example we break in a few seconds parameters with $80$-bit security claim.
Our work also shows that some parameters are not affected by our attack but at the cost of a lost of efficiency for the underlying schemes.
\end{abstract}

\maketitle

\keywords{Post-quantum cryptography; Gabidulin code; Polynomial reconstruction; Faure-Loidreau scheme.}


\section{Introduction}

\paragraph{\bf McEliece encryption setting.} Post-quantum cryptography aims at proposing schemes that resist to an hypothetical quantum computer.
It represents more and more a serious alternative to classical cryptography based on 
the discrete logarithm problem and the factorization problem. McEliece 
opened the way to code-based cryptography by proposing the first post-quantum 
(public-key encryption) scheme \cite{M78}. The McEliece cryptosystem is in fact
an encryption setting which relies on the hiding of particular class of decodable
codes.  The algorithmic assumption underlying the security 
is the difficulty of solving the closest vector problem with the Hamming metric for the particular class of masked decodable codes on which the scheme relies. 
Over the years many variants of the McEliece cryptosystem were proposed
with different families of codes, and many were broken by recovering the structure of the masked codes. However the original family of codes, the binary Goppa codes, proposed by McEliece essentially remains unattacked. The resistance to structural attacks, which try to recover the structure of the masked codes,  is the main potential weakness of this setting. For instance the highly structured Reed-Solomon codes are difficult to mask and most of McEliece variants relying on Reed-Solomon codes or variations on Reed-Solomon codes have been broken.

\paragraph{\bf Rank metric cryptography.} The McEliece cryptosystem setting is very versatile and only needs a decodable family of codes along with a particular masking technique of codes. 
Hence this approach can also be used with another metric than the classical Hamming metric.
An important metric emerging in cryptography is the rank metric which considers the ambient space $\F^{a b}$ where $\F$ is a (finite) field and $a$ and $b$ are positive integers, as the space of $a \times b$ matrices so that we can associate the rank to any vector from $\F^{ab}$. 
By viewing any finite extension of finite fields $\F / \K$ 
as a linear space over $\K$ of dimension $m > 1$ then for any positive integer $n$, the ambient space $\F^n$ can also be viewed as the space of $m \times n$ matrices. 
In \cite{GPT91} Gabidulin, Paramonov and Tretjakov proposed
the first rank-metric based encryption scheme. 
This scheme  can be seen as an analog of the McEliece's one but  based on the class of Gabidulin codes. 

The main interest of the rank metric is that the 
time complexity of best known generic attacks for rank metric grows faster regarding 
the size of parameters, than for Hamming metric. In practice, without additional
structure like cyclicity, it means that it is possible to obtain public key sizes for rank metric 
of only a few thousand bytes, when hundred of thousand bytes are needed for Hamming metric.

An important operation in the key generation of the GPT cryptosystem is the masking phase where 
the secret Gabidulin code  $\CG$ undergoes a  transformation to mask its inherent algebraic structure.
This transformation  is  a  probabilistic algorithm  that adds some randomness to its input $\CG$.
Originally, the authors  in \cite{GPT91} proposed to use  a \emph{distortion} 
transformation  that outputs (a generator matrix of) the code   
$\CG + \code{R}$ where $\code{R}$ is random code with a prescribed dimension 
$t_R$. The presence of $\code{R}$ has however an impact: 
the sender has to  add an error vector whose rank weight is $\tp = t - t_R$ where $t$ is the error correction capability of the $\CG$. Hence, roughly speaking,  the hiding phase publishes a degraded code in terms of 
error correction.

Gabidulin codes are often seen as  equivalent of Reed-Solomon codes because, like them, 
they are highly structured. That is the reason why their use in the GPT cryptosystem has been the subject to
several attacks. Gibson was the first to prove
the weakness of the system through a series of successful attacks \cite{G95,G96}.
Following these failures, the first works which modified the GPT scheme to avoid Gibson's attack were published in \cite{GO01,GOHA03}.  
The idea is to hide further the structure of Gabidulin code by considering  isometries for the rank metric.
Consequently, a \emph{right column scrambler} $\Pm$ is  introduced which is an invertible matrix with its entries in the base field $\fq$
while the ambient space of the Gabidlun code is $\fqm^n$. 
But Overbeck designed in \cite{O05,O05a,O08}  a more
general attack that dismantled all the existing  modified GPT cryptosystems. 
His approach consists in applying an operator $\Lambda_i$ which applies $i$ times  the Frobenius operation 
on the public generator matrix $\Gp$. The dimension increases by $1$ each time the Frobenius is applied. Therefore
by taking $i = n - k - 1$ the codimension becomes $1$ if $k$ is the rank of $\Gp$.
This phenomenon is a clearly distinguishing property of a Gabidulin code which cannot be encountered for instance with a random linear code where the dimension would increase by $k$ for each use of the Frobenius operator.

Overbeck's attack uses crucially  two important facts, namely the column scrambler matrix $\Pm$ is defined on the 
based field $\fq$ and the codimension of $\Lambda _{n-k-1}\left(\Gp \right)$ is equal to $1$.
Several works then proposed to resist to this attack  either by taking special random codes $\code{R}$
so that the second property is not true as in \cite{L10,RGH10}, or 
by taking a column scrambler matrix defined over the extension field $\fqm$ as in \cite{G08,GRH09,RGH11}.

But recently in \cite{OTN16} it  was  shown that even if the column scrambler is defined 
on the extension field as in  \cite{G08,GRH09,RGH11}, by using precisely Overbeck's technique,
it is still possible to recover very efficiently a secret Gabidulin code  whose error correction $t^*$ is certainly strictly less than the error correction of the secret original Gabidiulin code
but still strictly greater than the number of added errors  $\tp$. In other words, an attacker is still able to decrypt any ciphertext 
and consequently, all schemes based on Gabidulin codes presented in \cite{G08,GRH09,RGH11} are 
actually not secure.

\paragraph{\bf Faure-Loidreau's approach.} Besides the McEliece setting used with Gabidulin codes, Faure and Loidreau proposed in \cite{FL05} another approach for designing rank-metric encryption scheme based on Gabidulin codes. The scheme was supposed to be secure under the assumption that the problem of the \emph{linearized polynomial reconstruction}\footnote{In \cite{FL05} the problem is termed as $p$-polynomial reconstruction problem.} is intractable. 
This scheme follows the works done in  \cite{AF03,AFL03} where a public-key encryption scheme is defined 
that relies on the \emph{polynomial reconstruction} problem which corresponds to the decoding problem of
Reed-Solomon codes. The Polynomial Reconstruction (PR) consists in solving the following problem: \emph{given two $n$-tuples 
$(z_1,\dots{},z_n)$ and  $(y_1,\dots{},y_n)$ and parameters $[n, k, w]$, 
recover all polynomials $P$ of degree less than $k$ such that $P(z_i) = y_i$
for at most $w$ distinct indices $i \in \{1, \dots{} , n \}$}.
The public key is then a noisy random codeword from a Reed-Solomon code where the (Hamming) weight of the error is 
greater than the decoding capability of the Reed-Solomon code.
However the schemes have undergone polynomial-time attacks in \cite{C03,C04,KY04}. 
The authors  in \cite{FL05} proposed  an analog of Augot-Finiasz scheme but  in the rank-metric context. The security of \cite{FL05}
is related to the difficulty of solving $p$-polynomial reconstruction corresponding actually
 to the decoding problem of 
a Gabidulin code 
beyond its error-correcting capability. 
After Overbeck's attack, parameters proposed in \cite{FL05} were updated in \cite[Chap. 7]{L07} in order to resist to it. 

\paragraph{\bf Our results.}
We show in this article that the Faure-Loidreau scheme is vulnerable to a structural polynomial-time attack that recovers 
the private key from the public key. Based in part on the security analysis given in \cite[Chap. 7]{L07}, 
we show that by applying Overbeck's attack on an appropriate public code an attacker can recover the private key very efficiently, only assuming a mild condition on the code, which was always true in all our experimentations. 

Informally, the Faure-Loidreau encryption scheme considers three 
finite fields $\fq \subset \fqm \subset \L$.  The rank weight of vectors is computed over the field $\fq$.
The public key is  then composed of a Gabidulin code 
of dimension $k$  of length $n$ defined by a matrix $\Gm = ( g_{i,j})$ with $g_{i,j}  \in \fqm^n$ and 
$\Kv = \xv \Gm + \zv$ where $\xv$ is some vector  in $\L^k$ and $\zv$ is a vector of $\L^n$ with (rank) weight $w >\frac{1}{2}(n-k)$. 
Both vectors $\xv$ and $\zv$ have to be kept secret but from attacker's point of view the private key is \emph{essentially} $\xv$ since $\zv$ can be deduced from it.

Our attack uses the Frobenius operator, introduced by Overbeck, which takes as input any vector space 
 $U \subseteq \fqm^n$ and integer $i \ge 1$ in order to construct the vector space $\Lambda_i(U)$ defined as
\[
\Lambda_i(U) = U + U^q + \cdots{} + U^{q^i}.
\]
The first step of the attack  considers a basis $\gamma_1,\dots{}, \gamma_u$ of $\L$ viewed as a vector space over $\fqm$ of dimension $u > 1$ 
and  defines the vectors $\vv_i  = \Tr(\gamma_i \zv)$.  Our main result shows that the system can be broken in polynomial time and can be stated as follows:

\begin{theorem} \label{thm:attack}
If the $\fqm$-vector  space generated by $\vv_1,\dots{},\vv_u$ denoted by $V$ satisfies the property 
\begin{equation} \label{eq:dimOnB}
 \dim \Lambda_{n - w - k - 1}(V) = w
\end{equation}
then the private key $(\xv,\zv)$ can be recovered from $(\Gm,\Kv)$ with $O(n^3)$ operations  in the field $\L$.
 \end{theorem}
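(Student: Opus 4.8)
The plan is to assemble from $(\Gm,\Kv)$ a single $\fqm$-linear ``public code'' on which Overbeck's operator produces a code of codimension one, and then to read the rank support of $\zv$ off the resulting one–dimensional dual. Write $\zv=\word{b}\Mm$ with $\Mm\in\MS{w}{n}{\fq}$ of rank $w$ and $\word{b}=(b_1,\dots,b_w)\in\L^w$ an $\fq$-basis of the rank support of $\zv$; since $\Mm$ has entries in $\fq$, the $\fqm$-row space $R\eqdef\langle\text{rows of }\Mm\rangle_{\fqm}$ has dimension exactly $w$. First note that $V$ does not depend on the basis $\gamma_1,\dots,\gamma_u$: nondegeneracy of $\Tr$ gives $V=\{\Tr(\gamma\zv):\gamma\in\L\}$, and $\Tr(\gamma\zv)=\Tr(\gamma\word{b})\Mm\in R$, so $V\subseteq R$; moreover $\Lambda_i(V)\subseteq R$ for every $i$ because $\Mm^q=\Mm$. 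The role of hypothesis \eqref{eq:dimOnB} is precisely to upgrade this inclusion to the equality $\Lambda_{n-w-k-1}(V)=R$.

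Next I would build the public code. Fixing an $\fqm$-basis of $\L$ and expanding $\Kv$ coordinate-wise, the identity $\Kv=\xv\Gm+\zv$ shows that the $u$ expansion vectors of $\Kv$ together with the rows of $\Gm$ generate over $\fqm$ exactly $\CC\eqdef\CG+V$. Applying $\Lambda_{n-w-k-1}$, and using that $\Lambda_i$ is additive and turns the Gabidulin code $\CG$ of dimension $k$ into the Gabidulin code $\CG'\eqdef\Lambda_{n-w-k-1}(\CG)$ of dimension $n-w-1$ with the same support, one gets $\Lambda_{n-w-k-1}(\CC)=\CG'+\Lambda_{n-w-k-1}(V)=\CG'+R$. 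Now $\CG'$ is MRD with minimum rank distance $w+2$, whereas every word of $R$ has $\fq$-rank at most $w$; hence the sum $\CG'+R$ is direct, $\dim\Lambda_{n-w-k-1}(\CC)=(n-w-1)+w=n-1$, and $\pdual{\Lambda_{n-w-k-1}(\CC)}$ is spanned by a single nonzero vector $\word{d}$.

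The core step is to identify $\word{d}$. Since $\word{d}\in\pdual{\CG'}$ and the dual of an MRD code is MRD of dimension $n-(n-w-1)=w+1$, the vector $\word{d}$ has $\fq$-rank at least $n-w$. On the other hand $\word{d}\perp R$ means $\Mm\word{d}^{\top}=\zz$, hence $\Mm(\word{d}^{q^{j}})^{\top}=\zz$ for all $j$ (again because $\Mm$ has entries in $\fq$), so $\Lambda_{n-1}(\langle\word{d}\rangle)\perp R$; as $\dim\Lambda_{n-1}(\langle\word{d}\rangle)$ equals the $\fq$-rank of $\word{d}$ and $\dim R=w$, this rank is at most $n-w$. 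Therefore the $\fq$-rank of $\word{d}$ equals $n-w$ and $\pdual{\Lambda_{n-1}(\langle\word{d}\rangle)}=R$. Being stable under the $q$-power map, $R$ has a reduced row echelon basis with entries in $\fq$, i.e. a matrix $\Mm_0\in\MS{w}{n}{\fq}$ with the same $\fq$-row space as $\Mm$; thus $\Mm=\Rm\Mm_0$ for some $\Rm\in\GL_w(\fq)$ and $\zv=(\word{b}\Rm)\Mm_0$.

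Finally I would recover $\xv$ by linear algebra over $\L$. With $\Gm$ and $\Mm_0$ known, $\Kv=(\xv\mid\word{b}\Rm)\,\mat{H}$, where $\mat{H}$ is the $(k+w)\times n$ matrix obtained by stacking $\Gm$ over $\Mm_0$, is an $\L$-linear system in the $k+w$ unknowns $(\xv,\word{b}\Rm)$. Its coefficient matrix has full row rank $k+w$, because $\CG$ has minimum rank distance $n-k+1>w$ and therefore meets the $\fqm$-row space of $\Mm_0$ trivially, while $k+w\le n-1<n$; hence the system has a unique solution, which yields $\xv$ and $\word{b}\Rm$, and then $\zv=(\word{b}\Rm)\Mm_0$. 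Every step is Gaussian elimination, dualizing, or solving a linear system on matrices of size $O(n)\times n$ over $\L$ (the Frobenius iterations over $\fqm$ included), so the total cost is $O(n^3)$ operations in $\L$. The step I expect to resist is the double rank estimate on $\word{d}$: the lower bound rests on the MRD property of $\pdual{\CG'}$, while the matching upper bound is exactly where \eqref{eq:dimOnB} is indispensable — without it $\Lambda_{n-w-k-1}(V)$ might be a proper subspace of $R$, the code $\Lambda_{n-w-k-1}(\CC)$ might have codimension larger than one, and $\word{d}$ would expose only part of the support of $\zv$.
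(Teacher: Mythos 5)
Your proof is correct, and while its first half retraces the paper's distinguishing argument, the second half takes a genuinely different route. Like the paper, you form the public code $\Cpub=\gab{k}{\gv}+V$ and show that under \eqref{eq:dimOnB} the code $\Lambda_{n-w-k-1}(\Cpub)$ has codimension one; but where the paper conjugates by the secret $\Pm$ to exhibit an explicit block structure and reads off the rank of each block (Proposition~\ref{prop:final}), you argue coordinate-free: $\Lambda_{n-w-k-1}(V)$ fills the $w$-dimensional Frobenius-stable support code $R$ of $\zv$, and $R$ meets $\gab{n-w-1}{\gv}$ trivially by the MRD bound. The real divergence is in how the one-dimensional dual is exploited. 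The paper picks $\widetilde{\hv}$, finds $\Tm\in\GL_n(\fq)$ with $\widetilde{\hv}(\Tm^{-1})^T=\left(\ZZ \mid \hv^\prime\right)$, proves $\zv\Tm=\left(\zv^*\mid\zz\right)$ via a $2\times 2$ block computation on $\Tm^{-1}\Pm$, and solves the last $n-w$ coordinates of \eqref{KT} for the $k$ unknowns $\xv$. You instead pin down the dual generator $\word{d}$ by a two-sided rank estimate ($\norm{q}{\word{d}}\ge n-w$ from the MRD property of the dual Gabidulin code, $\norm{q}{\word{d}}\le n-w$ from orthogonality to $R$), recover $R=\pdual{\Lambda_{n-1}(\fqm\word{d})}$ together with its canonical $\fq$-basis $\Mm_0$ by Galois descent on the reduced echelon form, and solve the joint system in the $k+w$ unknowns $(\xv,\word{y})$, whose coefficient matrix has full row rank because $\gab{k}{\gv}\cap R=\{0\}$. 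What your route buys is that it never mentions $\Pm$ and makes explicit that the attack recovers the full rank support of $\zv$ rather than merely a coordinate change concentrating it; what the paper's route buys is a smaller final system and no appeal to MRD duality. Both complexity claims rest on the same slightly informal accounting of the cost of building $\Lambda_{n-w-k-1}(\Cpub)$ by iterated Frobenius and Gaussian elimination, so your $O(n^3)$ is exactly as justified as the paper's.
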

Notice that if $V$ behaves as random code then generally the condition \eqref{eq:dimOnB} holds.
We implemented our attack on parameters given in \cite{FL05,L07} for $80$-bit security, which were broken in a few seconds. 
A necessary condition for \eqref{eq:dimOnB} to be true is to choose 
$u(n - w - k ) \geq w$ that is to say
 \[
 w \leq \frac{u}{u+1} \left( n - k \right).
 \] 
This  was always the case for parameters proposed in \cite{FL05,L07}.

\paragraph{\bf Related work.}
The attack presented in this paper is very similar to the approach proposed 
in \cite{LO06} where the authors seek to decode several noisy codewords of a Gabidulin code.
Let us assume that we received $\ell$ words $\zv_1,\dots{},\zv_\ell$  from $\fqm^n$ where 
each $\zv_i$ is written as $z_i = \cv_i + \ev_i$  with $\cv_i$ belonging to a Gabidulin code $\CG$ of dimension $k$ and length $n$
over $\fqm$ and the $\ev_i$'s are vectors from $\fqm^n$. Let us denote by $\Em$ the matrix of size $\ell \times n$ formed by the $\ev_i$'s
and let $\norm{q}{\Em}$ be the dimension of the $\fq$-vector space generated by the columns of $\Em$.
The authors show that when $\norm{q}{\Em} \leq  \frac{\ell}{\ell+1} \left( n - k \right)$
then Overbeck's technique recovers in $O(n^3)$ operations the codewords $\cv_1,\dots{},\cv_\ell$. It therefore provides a method that decodes 
a Gabidulin code beyond the classical error-correcting limit $\frac{1}{2} \left( n - k \right)$.
This approach can be used here to attack the Faure-Loidreau scheme  \cite{FL05} because 
the vectors $\Tr(\gamma_1 K),\dots{},\Tr(\gamma_u K)$ can be written as
$\cv_1+\vv_1,\dots{},\cv_u+\vv_u$  where $\cv_i = \Tr(\gamma_i \xv) \Gm$ belong to the Gabidulin generated by $\Gm$ 
and the $u \times n$ matrix $\Vm$ formed by $\vv_1,\dots{},\vv_u$ satisfy 
$\norm{q}{\Vm} = w$ which in turn has to verify $w \leq  \frac{u}{u+1} \left( n - k \right)$.

\paragraph{\bf Organisation.} In Section~\ref{sec:prelim} notation and important notions useful for our paper are given. Gabidulin codes are recalled in \ref{sec:gab}. 
In Section~\ref{sec:scheme} we present the Faure-Loidreau scheme and 
in Section~\ref{sec:attack} we describe in full details our attack against it.

\section{Preliminaries} \label{sec:prelim}

Vectors from $\F^n$ where $\F$ is a field are denoted by boldface letters as $\av = (a_1,\dots{},a_n)$. The concatenation
of two vectors $\uv$ and $\vv$ is denoted by $\left( \uv \mid \vv\right)$.
The set of matrices with entries in $\F$ having $m$ rows and $n$ columns is denoted by $\MS{m}{n}{\F}$ and 
the subset of $n \times n $ invertible matrices  form the general linear group denoted by $\GL_n(\F)$. 
A \emph{linear code $\CC$ of length} $n$ over a field $\F$ is a linear subspace of $\F^n$. 
An element of a code is called a \emph{codeword} and a matrix whose rows form a basis is 
called a \emph{generator matrix}.
The \emph{dual} of a code $\CC \subset \F^n$ is the linear space denoted by $\dual{\CC}$ containing 
vectors $\zv \in \F^n$ such that:
\[
\forall \cv \in \CC, \;\; \inpro{\cv}{\zv} = \sum_{i=1}^n c_i z_i = 0. 
\]
Any generator matrix of $\dual{\CC}$ is called a \emph{parity-check} matrix of $\CC$.

The finite field with $q$ elements is denoted by $\fq$ where $q$ is a power of a prime number $p$.
The \emph{trace operator} of $\fqm$ over $\fq$ 
is the $\fq$-linear map $\Trqm :  \fqm  \longrightarrow  \fq$  defined for any $x$ in $\fqm$ by
\[
\Trqm (x) = x + x^{q}+ \cdots{} +x^{q^{m-1}}.
\]
Let $\mathfrak{B} = \{b_1,\dots{},b_m\}$ be a basis of $\fqm$ over $\fq$. The \emph{dual} basis, or also called the \emph{trace orthogonal} basis 
of $\mathfrak{B}$  is a basis $\mathfrak{B}^* = \{b^*_1,\dots{},b^*_m\}$ of $\fqm$ over $\fq$ such that for any $i$ and $j$ in $\{1\dots{},m\}$
\[
\Trqm( b_i b^*_j) = \delta_{i,j}
\]
where $\delta_{i,i} = 1$ and $\delta_{i,j} = 0$ when $i \neq j$.
Note that there always exits a dual basis and furthermore 
it is  possible to express any $\alpha$ from $\fqm$ as
\begin{equation}
\alpha = \sum_{i=1}^m \Trqm( \alpha b^*_i) b_i.
\end{equation}

Any univariate polynomial $f\in \fqm[X]$ of the form $f_0 + f_1 X^{q} + \cdots{} + f_k X^{q^{d}}$ where $0\leq  d <m$ is a called 
a $q$-\emph{linearised} polynomial and $d$ is its \emph{$q$-degree}.

Any map $h : U \rightarrow V$ is naturally extended to vectors $\xv \in U^n$ by $h(\xv) = ( h(\xv_1),\dots{},h(\xv_n))$.  
This applies in particular to the cases where $h$ is a polynomial  or is the Frobenius (and trace) operator. 
For any subsets $U \subset \F^n$ and $V \subset \F^n$ 
the notation $U + V$ represents the set $\{ \uv + \vv ~ | ~ \uv \in U \text{ and } \vv \in V \}$.
For any subfield $\K \subseteq \F$ and $\xv$ form $\F^n$ the $\K$-vector space generated by  $\xv$ 
is denoted by $\K \xv$.
For any $U \subset \F^n$ and for any $\Pm \in \GL_n(\K)$ the notation $U \Pm$ is used to denote the 
set $\{ \uv \Pm ~ | ~ \uv \in U \}$.
For any subset $V \subseteq \fqm^n$ and any integer $i \ge 0$ we define $V^{q^i}$ as the set of vectors 
$\vv^{q^i}=(v_1^{q^i},\dots{},v_n^{q^i})$ where $\vv$ describes $V$. Note that when $V$ is a vector space then
$V^{q^i}$  is also a  linear subspace of  $\fqm^n$.

\begin{definition}
The \emph{rank weight} of $\xv \in \fqm^n$  denoted by $\norm{q} \xv$
is  the dimension of the $\fq$-vector space generated by $\{x_1,\dots{},x_n\}$, or equivalently
\begin{equation}
\norm{q} \xv = \dim \sum_{i=1}^n \fq  x_i.
\end{equation}
 \end{definition}
Note that for any $\xv \in \F^{n}$ with  $\norm{q}{\xv} = w$ 
there exists  $\Pm$ in $\GL_n(\fq)$  and $\xv^* \in \fqm^{w}$ 
such that
$\xv \Pm =(\xv^* \mid \zz)$
and 
$\norm{q}{\xv^*} = w$.

Finally, an algorithm $D : \F^n \rightarrow \CC$ is said to 
decode $t$ errors in a code $\CC \subset \F^n$ if for any $\cv \in \CC$ and for any $\ev \in \F^n$ 
such that $\norm q \ev \le t$ we have $D(\cv + \ev) = \cv$. 
Generally, we call such a vector $\ev$ an \emph{error} vector.

\section{Gabidulin Codes} \label{sec:gab}

We now introduce an important family of codes known for having an efficient decoding algorithm for the rank metric.

\begin{definition}[Gabidulin code]
Let $\gv$ in $\fqm^{n}$ such that $\norm{q} \gv =n$.
The Gabidulin code $\gab{k}{\gv}$ of length $n$ and dimension $k$ 
is the $\fqm$-linear subspace of $\fqm^n$  defined by 
\begin{equation}
\gab{k}{\gv} = \left \{~ 
f(\gv)   ~|~ f = f_0 + f_1 X^{q} + \cdots{} + f_k X^{q^{k-1}} \in \fqm[X]
~\right \}.
\end{equation}
Equivalently, a generator matrix of $\gab{k}{\gv}$ is given by $\Gm$ where
 \begin{equation} \label{gab:genmat}
\Gm=
\begin{pmatrix}
g_{1} & \cdots{} & g_{n} \\
g_{1}^{q} & \cdots{} & g_{n}^{q} \\
\vdots{}      &              &  \vdots{} \\
g_{1}^{q^{k-1}} & \cdots{} & g_{n}^{q^{k-1}}
\end{pmatrix}.
\end{equation}
\end{definition}
Gabidulin codes are known to possess a fast decoding algorithm that 
can decode errors of weight $t$ provided that $t \leq \lfloor \frac{1}{2}(n-k) \rfloor$.
Furthermore the dual of a Gabidulin code $\gab{k}{\gv}$ is also a Gabidulin code (see for instance \cite{G85,GPT91,B03}).

\begin{proposition}
The dual of  $\gab{k}{\gv}$ is the Gabidulin code $\gab{n-k}{\hv^{q^{-(n-k - 1)}}}$ where $\hv$ belongs to 
 $\dual{\gab{n-1}{\gv}}$ and $\norm{q} \hv = n$.
\end{proposition}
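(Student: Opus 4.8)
The plan is to prove the equality by combining one inclusion with a dimension count, after first pinning down the vector $\hv$. Since $\gab{k}{\gv}$ has dimension $k$ (its generator matrix \eqref{gab:genmat} is a $k\times n$ Moore matrix of full row rank because $\norm{q}{\gv}=n$), the dual $\dual{\gab{k}{\gv}}$ has dimension $n-k$. Granting for the moment that $\norm{q}{\hv}=n$, the $n-k$ rows $\hv^{q^{-(n-k-1)}},\dots,\hv^{q^{-1}},\hv$ of the generator matrix of $\gab{n-k}{\hv^{q^{-(n-k-1)}}}$ are $\fqm$-linearly independent (again a Moore matrix on a weight-$n$ vector has full row rank), so this code genuinely has dimension $n-k$ as well. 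Hence it suffices to prove the single inclusion $\gab{n-k}{\hv^{q^{-(n-k-1)}}}\subseteq\dual{\gab{k}{\gv}}$, that is, that every generator row of the former is orthogonal to every row of $\Gm$.

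The heart of the argument is a short Frobenius-twisting computation. By definition $\hv$ spans the one-dimensional code $\dual{\gab{n-1}{\gv}}$, which encodes exactly the relations
\[
\inpro{\gv^{q^a}}{\hv}=\sum_{j=1}^n g_j^{q^a}h_j=0,\qquad a=0,1,\dots,n-2 .
\]
Each of these is an identity in $\fqm$, so applying the field automorphism $x\mapsto x^{q^s}$ (well defined even for negative $s$, since the Frobenius has finite order on $\fqm$) gives $\inpro{\gv^{q^{a+s}}}{\hv^{q^s}}=0$ for every integer $s$. I then need $\inpro{\gv^{q^i}}{\hv^{q^s}}=0$ for the relevant rows, namely $i\in\{0,\dots,k-1\}$ and $s\in\{-(n-k-1),\dots,0\}$; setting $a=i-s$ reduces this to checking that $i-s$ stays in the admissible window $\{0,\dots,n-2\}$. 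The parameters force precisely this: the extreme pairs $(i,s)=(0,0)$ and $(i,s)=(k-1,-(n-k-1))$ give $a=0$ and $a=n-2$, and every intermediate pair lands in between. This establishes the inclusion, hence the equality.

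It remains to justify the choice of $\hv$, and this is where I expect the only genuine work. Existence of a nonzero $\hv$ is immediate because $\dual{\gab{n-1}{\gv}}$ is one-dimensional; moreover every nonzero element of a one-dimensional $\fqm$-code has the same rank weight, so the real content is to show this common weight is $n$. I would argue by contradiction: if $\norm{q}{\hv}=r<n$, pick $\Pm\in\GL_n(\fq)$ with $\hv\Pm=(\hv^*\mid\zz)$ and $\norm{q}{\hv^*}=r$, and set $\uv=\gv(\Pm^T)^{-1}$, which still has rank weight $n$ since right multiplication by an $\fq$-invertible matrix preserves rank weight. Because $\Pm$ has entries in $\fq$ it commutes with the Frobenius, so the relations above transform into $\sum_{i=1}^{r}u_i^{q^a}h_i^{*}=0$ for $a=0,\dots,n-2$. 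Retaining only $a=0,\dots,r-1$ (allowed as $r-1\le n-2$) yields a homogeneous system in $h_1^{*},\dots,h_r^{*}$ whose coefficient matrix is the $r\times r$ Moore matrix built from $u_1,\dots,u_r$; these coordinates are $\fq$-linearly independent, so that matrix is invertible and forces $\hv^{*}=\zz$, contradicting $\norm{q}{\hv^*}=r\ge 1$.

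The main obstacle is therefore not the elegant orthogonality step but the full-rank claim $\norm{q}{\hv}=n$: it is exactly what turns the right-hand side into a bona fide Gabidulin code of the correct dimension, and the contradiction argument above rests on the classical fact that a square Moore matrix is invertible precisely when its generating vector has full rank weight (equivalently, that evaluation on $q$-linearised polynomials of $q$-degree $<n$ is a bijection). I would invoke that as the standard ingredient recorded in \cite{G85,GPT91,B03}; everything else is bookkeeping with Frobenius exponents.
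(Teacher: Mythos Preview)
Your argument is correct. The paper itself does not supply a proof of this proposition: it simply records the statement and defers to the references \cite{G85,GPT91,B03}. What you wrote is precisely the classical proof found there---dimension count plus the Frobenius-shifted orthogonality $\inpro{\gv^{q^{i}}}{\hv^{q^{s}}}=\bigl(\inpro{\gv^{q^{i-s}}}{\hv}\bigr)^{q^{s}}=0$ with $0\le i-s\le n-2$, together with the Moore-matrix contradiction showing $\norm{q}{\hv}=n$. One tiny point you use implicitly and might state: the Frobenius preserves rank weight, so $\norm{q}{\hv^{q^{-(n-k-1)}}}=\norm{q}{\hv}=n$, which is what makes the right-hand code a genuine Gabidulin code of dimension $n-k$.
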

We also have the following proposition.  

\begin{proposition}\label{prop:equivgab}
For any  $\Pm$ in $\GL_n(\fq)$ and for any Gabidulin code $\gab{k}{\gv} \subset \fqm^n$
with $\norm{q} \gv = n$ then 
\begin{equation}
 \gab{k}{\gv} \Pm = \gab{k}{\gv \Pm}.
\end{equation}
\end{proposition}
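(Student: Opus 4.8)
The plan is to exploit two facts: a $q$-linearized polynomial acts on $\fqm$ as an $\fq$-linear map, and the entries of $\Pm$ lie in the base field $\fq$. Together these make $\Pm$ commute with the evaluation of such polynomials when it is applied coordinatewise, and the proposition follows almost immediately.

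First I would record the elementary observation that if $f = f_0 + f_1 X^{q} + \cdots{} + f_{k-1} X^{q^{k-1}} \in \fqm[X]$ is a $q$-linearized polynomial, then for all $x,y \in \fqm$ and all $\lambda \in \fq$ one has $f(x+y) = f(x) + f(y)$ and $f(\lambda x) = \lambda f(x)$, the latter because $\lambda^{q^{i}} = \lambda$ for every $i \ge 0$ when $\lambda \in \fq$. In other words, $f$ restricts to an $\fq$-linear endomorphism of $\fqm$. Next, writing $\Pm = (p_{i,j})$ with all $p_{i,j} \in \fq$ and $\gv = (g_1,\dots{},g_n)$, the $j$-th coordinate of $\gv \Pm$ is $\sum_{i=1}^{n} p_{i,j} g_i$, so by $\fq$-linearity
\[
f\bigl( (\gv \Pm)_j \bigr) = f\Bigl( \sum_{i=1}^{n} p_{i,j} g_i \Bigr) = \sum_{i=1}^{n} p_{i,j}\, f(g_i),
\]
which is exactly the $j$-th coordinate of $f(\gv)\Pm$. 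Hence $f(\gv \Pm) = f(\gv)\Pm$ for every $q$-linearized $f$ of $q$-degree at most $k-1$.

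Finally I would note that $\norm{q}{\gv \Pm} = \norm{q}{\gv} = n$, since an invertible matrix over $\fq$ does not change the $\fq$-span of the coordinates, so that $\gab{k}{\gv \Pm}$ is a bona fide Gabidulin code; then
\[
\gab{k}{\gv \Pm} = \{\, f(\gv \Pm) : \qdeg f \le k-1 \,\} = \{\, f(\gv)\Pm : \qdeg f \le k-1 \,\} = \gab{k}{\gv}\,\Pm,
\]
which is the claimed equality. I do not expect any genuine obstacle here: the whole content is the identity $f(\gv \Pm) = f(\gv)\Pm$, and the only point needing a (trivial) check is that $\gv \Pm$ still has rank weight $n$, so that the right-hand side parses as a Gabidulin code at all. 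The hypothesis $\Pm \in \GL_n(\fq)$ rather than $\GL_n(\fqm)$ is used essentially in both places, and the statement is false without it.
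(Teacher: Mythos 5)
Your proof is correct and is essentially the paper's own argument spelled out in more detail: the paper likewise reduces the claim to the identity $\left(\gv \Pm\right)^{q^i} = \gv^{q^i}\Pm$ for $\Pm \in \GL_n(\fq)$, which is exactly your observation that $q$-linearized polynomials are $\fq$-linear and hence commute with $\Pm$. The extra check that $\norm{q}{\gv\Pm} = n$ is a sensible addition the paper leaves implicit.
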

\begin{proof}
The proof of this proposition comes directly from the fact that for any positive integer $i$, and for any $\Pm$ in $\GL_n(\fq)$,
\[
\left(\gv \Pm \right)^{ q^i }=\gv^{q^i}\Pm
\]
 
\end{proof}

We gather important algebraic properties about Gabidulin codes in order to 
explain why many attacks occur when the underlying code is a Gabidulin code $\gab{k}{\gv}$. 
One key property is 
that Gabidulin codes can be easily distinguished from random linear codes. This singular behaviour has been precisely exploited by Overbeck \cite{O05,O05a,O08} to mount attacks. For that purpose we introduce the operator $\Lambda_i$ defined for any linear vector subspace $U \subseteq \fqm^n$ by 
\begin{equation}
\Lambda_i(U) = U + U^q + \cdots{} + U^{q^i}.
\end{equation} 
This operator can also be defined over matrices in an obvious manner. For instance a generator matrix of $\gab{k}{\gv}$
is $\Lambda_{k-1}(\gv)$. This implies in particular the next proposition.

\begin{proposition} \label{prop:dsg_gab}
For any  $i \ge 0$,
$\Lambda_i\left(\gab{k}{\gv}\right)
= 
\gab{k+i}{\gv}$
which implies in particular that
\[ \dim \Lambda_i \left(\gab{k}{\gv}\right) =  \min \{ k + i, n\}.
\]
\end{proposition}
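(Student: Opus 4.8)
The plan is to establish the set equality $\Lambda_i\!\left(\gab{k}{\gv}\right) = \gab{k+i}{\gv}$ first and then read off the dimension from it. The crucial — if elementary — observation is that although the Frobenius map $x \mapsto x^{q}$ is only $\fq$-linear on $\fqm^n$, for any $\fqm$-linear code $\CC$ with generator matrix $\Gm$ the image set $\CC^{q^j} = \{\cv^{q^j} \mid \cv \in \CC\}$ is again an $\fqm$-linear code and $\Gm^{q^j}$ (the entrywise $q^j$-th power) is one of its generator matrices. Indeed, writing $\cv = \sum_\ell \lambda_\ell \gv_\ell$ for the rows $\gv_\ell$ of $\Gm$ and $\lambda_\ell \in \fqm$, one gets $\cv^{q^j} = \sum_\ell \lambda_\ell^{q^j} \gv_\ell^{q^j}$, and since $\lambda \mapsto \lambda^{q^j}$ is a bijection of $\fqm$, the vector $\cv^{q^j}$ ranges exactly over the $\fqm$-span of the rows of $\Gm^{q^j}$.

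Applying this to $\CC = \gab{k}{\gv}$, which by the discussion preceding the statement admits $\Lambda_{k-1}(\gv)$ as a generator matrix, i.e. has rows $\gv,\gv^{q},\dots,\gv^{q^{k-1}}$, we obtain that $\gab{k}{\gv}^{q^j}$ is generated by $\gv^{q^j},\dots,\gv^{q^{k+j-1}}$. Therefore
\[
\Lambda_i\!\left(\gab{k}{\gv}\right) = \sum_{j=0}^{i} \gab{k}{\gv}^{q^j}
\]
is the $\fqm$-span of $\{\gv^{q^\ell} \mid 0 \le \ell \le k+i-1\}$, that is, of the rows of $\Lambda_{k+i-1}(\gv)$. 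By definition this span is $\gab{k+i}{\gv}$, which proves the first assertion.

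For the dimension statement it then suffices to show $\dim \gab{d}{\gv} = \min\{d,n\}$ for every $d \ge 1$. If $d \le n$, I would argue that the generators $\gv,\gv^{q},\dots,\gv^{q^{d-1}}$ are $\fqm$-linearly independent: a nontrivial relation would yield a nonzero $q$-linearised polynomial $f$ of $q$-degree at most $d-1$ with $f(g_1)=\cdots=f(g_n)=0$; but the set of roots of such an $f$ is an $\fq$-subspace of $\fqm$ of dimension at most $d-1$, whereas $g_1,\dots,g_n$ are $\fq$-linearly independent by the hypothesis $\norm{q}{\gv}=n$, forcing $n \le d-1$, which contradicts $d \le n$. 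If $d \ge n$, then $\gab{d}{\gv} \supseteq \gab{n}{\gv}$, which has dimension $n$ by the previous case, while $\gab{d}{\gv} \subseteq \fqm^n$; hence $\dim \gab{d}{\gv}=n$. Combining, $\dim \Lambda_i\!\left(\gab{k}{\gv}\right) = \dim \gab{k+i}{\gv} = \min\{k+i,n\}$.

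There is no serious obstacle here; the only points requiring a little care are keeping track of the fact that the Frobenius is $\fqm$-semilinear rather than $\fqm$-linear on $\fqm^n$ — so that one must invoke the bijectivity of $\lambda \mapsto \lambda^{q^j}$ to conclude that $\CC^{q^j}$ is a genuine $\fqm$-linear code with the expected generator matrix — together with the standard bound on the root space of a $q$-linearised polynomial used in the dimension count.
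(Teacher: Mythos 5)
Your proof is correct and follows exactly the route the paper has in mind: the paper leaves the proposition as an immediate consequence of the observation that $\Lambda_{k-1}(\gv)$ is a generator matrix of $\gab{k}{\gv}$, so that applying $\Lambda_i$ simply extends the list of rows to $\gv,\gv^{q},\dots,\gv^{q^{k+i-1}}$. You have merely (and usefully) filled in the two details the paper omits — the semilinearity of Frobenius handled via the bijectivity of $\lambda\mapsto\lambda^{q^j}$ on $\fqm$, and the dimension count via the root space of a $q$-linearised polynomial.
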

The importance of Proposition~\ref{prop:dsg_gab} becomes clear when we compare it to the case of random codes.

\begin{proposition}
Let $\CA \subset \fqm^n$ be a code generated by a randomly drawn  matrix from $\MS{k}{n}{\fqm}$ then with a high probability
\begin{equation}
\dim \Lambda _{i}(\CA) = 
\min\big \{(i+1)k, n\big\}
\end{equation}
\end{proposition}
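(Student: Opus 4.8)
The plan is to turn the claim into a genericity statement about a single matrix. Write $r = \min\{(i+1)k, n\}$ and let $\Gm \in \MS{k}{n}{\fqm}$ be the random matrix generating $\CA$; up to discarding a negligible proportion of matrices we may assume $\rank \Gm = k$. Exactly as $\Lambda_{k-1}(\gv)$ is a generator matrix of $\gab{k}{\gv}$, the $(i+1)k \times n$ matrix $\Lambda_i(\Gm)$ obtained by stacking $\Gm, \Gm^{q}, \dots, \Gm^{q^{i}}$ is a generator matrix of $\Lambda_i(\CA)$, since its rows span $\CA + \CA^q + \dots + \CA^{q^i}$. Hence $\dim\Lambda_i(\CA) = \rank \Lambda_i(\Gm) \le r$ for every $\Gm$, and the proposition amounts to showing that $\rank \Lambda_i(\Gm) = r$ for all but a negligible fraction of $\Gm \in \MS{k}{n}{\fqm}$.

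The route I would take for this is Schwartz--Zippel. The bad locus $\{\Gm : \rank\Lambda_i(\Gm) < r\}$ is precisely the set on which all $r \times r$ minors of $\Lambda_i(\Gm)$ vanish. Each entry of $\Lambda_i(\Gm)$ is a $q^j$-th power ($0\le j\le i$) of one of the $kn$ entries of $\Gm$, so every such minor is a polynomial in these $kn$ variables, with coefficients in $\fqm$, of total degree at most $r\,q^{i}$. Consequently, as soon as one of these minors is a \emph{nonzero} polynomial $P$, the Schwartz--Zippel lemma over $\fqm$ gives that the proportion of $\Gm$ with $P(\Gm) = 0$ is at most $r\,q^{i}/q^{m}$, which is negligible provided $m$ is a little larger than $i$ (in particular whenever $m \ge n$, the standard regime for Gabidulin codes). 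Thus everything reduces to producing one matrix $\Gm_0 \in \MS{k}{n}{\fqm}$ with $\rank\Lambda_i(\Gm_0) = r$.

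This is the step I expect to be the real obstacle, and the obvious candidate is a trap: by Proposition~\ref{prop:dsg_gab} a Gabidulin generator matrix has $\rank \Lambda_i$ equal to $k+i$, far below $(i+1)k$ in general, so it is precisely the most structured matrices that fail. Instead I would use a block construction, assuming the mild hypothesis $m \ge i+1$ (again automatic when $m \ge n$). Partition the $n$ coordinates into blocks and, in the $l$-th block, put a row of the shape $(\beta_l, \beta_l^{q}, \beta_l^{q^2}, \dots)$ where $\beta_l$ generates a normal basis of $\fqm$ over $\fq$; all other entries of $\Gm_0$ are zero. On each block the rows of $\Lambda_i(\Gm_0)$ restrict to a generalized Moore matrix $(\beta_l^{q^{a+b}})_{a,b}$, invertible (on a square sub-block of the right size) because normality forces the relevant powers $\beta_l, \beta_l^{q}, \dots, \beta_l^{q^{i}}$ of $\beta_l$ to be $\fq$-linearly independent; distinct blocks have disjoint supports, so their rank contributions simply add. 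Choosing $k$ blocks, each of size at least $i+1$, when $(i+1)k \le n$ gives $\rank\Lambda_i(\Gm_0) = (i+1)k$; choosing $\lceil n/(i+1)\rceil \le k$ blocks that cover all $n$ coordinates (the remaining rows left zero) when $(i+1)k > n$ gives $\rank\Lambda_i(\Gm_0) = n$. In both cases $\rank\Lambda_i(\Gm_0) = r$, so some $r\times r$ minor is a nonzero polynomial and the argument of the previous paragraph concludes. The only genuine computation along the way is the invertibility of the Moore matrix, which is classical.
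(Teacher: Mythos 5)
Your argument is correct, and there is in fact nothing in the paper to compare it against: the authors state this proposition without proof, justifying it only by the informal remark that for a random code one expects $\dim \CA \cap \CA^q = 0$, in contrast with Gabidulin codes. What you have written is therefore a complete proof where the paper has only a heuristic. Both of your ingredients are sound: the reduction of ``with high probability'' to the non-vanishing of a single $r \times r$ minor of the stacked matrix $\Lambda_i(\Gm)$ via Schwartz--Zippel (the bad locus is contained in the zero set of any one minor that is nonzero as a polynomial), and the explicit witness $\Gm_0$ built from disjointly supported blocks whose $(i+1)\times s_l$ restrictions are Moore matrices $\bigl(\beta^{q^{a+b}}\bigr)$ of $\fq$-linearly independent elements, hence of full rank; you are also right that the na\"ive candidate (a single Moore-type matrix) is exactly the degenerate case. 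Two small points would tighten the write-up without changing it. First, the bound $r q^{i}/q^{m}$ is only meaningful when $q^{m-i} \gg r$; this costs nothing because one may assume $i+1 \le \lceil n/k\rceil$ without loss of generality (once $\Lambda_j(\CA) = \fqm^n$ it remains so for all larger $j$), which also makes your hypothesis $m \ge i+1$ literally automatic from $m \ge n$, and in the regime relevant to the attack ($i = n-w-k-1$, $m \ge n$) the bound improves to $r/q^{w+k+1}$. Second, in the case $(i+1)k > n$ you should state explicitly that each block contributes rank $s_l$ because its top $s_l \times s_l$ sub-block is again a Moore matrix of $\fq$-independent elements. Neither point is a gap.
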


\begin{remark}
Another way of understanding the previous proposition is to observe that if $\CA$ is random code then $\dim \CA \cap \CA^q = 0$
whereas for Gabidulin codes we would obtain 
\[
\dim \gab{k}{\gv} \cap \gab{k}{\gv}^q = k-2.
\]
\end{remark}
Thus there is property  that can be computed  in polynomial time such that it distinguishes  between a Gabidulin code 
and a random code. This important fact has  been used successfully in the cryptanalysis of several 
encryption schemes \cite{COT14,CGGOT14,OT15}.

\section{Faure-Loidreau Encryption Scheme} \label{sec:scheme}

\paragraph{\bf Key generation.} Throughout this step, besides the fields $\fq$ and $\fqm$, another 
field $\L$ is considered where $\L$ is the extension of $\fqm$  
of degree $u > 1$,  and three integers $k$, $n$ and $w$
such that $u < k < n$ and 
\begin{equation}
n - k > w  > \left \lfloor \frac{n-k}{2} \right \rfloor.
\end{equation}

\begin{enumerate}
\item Pick at random $\gv \in \fqm^n$ with $\norm{q}{\gv} = n$ and let 
$\Gm \in \MS{k}{n}{\fqm}$ be the generator matrix of 
$\gab{k}{\gv} \subset \fqm^n$ as in \eqref{gab:genmat}

\item Pick at random $\xv \in \L^k$ such that $\{x_{k-u+1},\dots{}, x_{k} \}$ form a basis of $\L$ over 
$\fqm$  

\item Generate randomly  $\sv \in \L^w$ with $\norm{q}{\sv} = w$ and $\Pm \in \GL_n(\fq)$   and then
compute $\zv \in \L^n$ defined as
\begin{equation}
\zv = \left( \sv \mid \zz \right) \Pm^{-1}.
\end{equation} 
\end{enumerate}
The private key is $\left(\xv, \Pm \right)$ and the public key is $(\gv,k,\Kv,\tp)$ where
\begin{equation}
\Kv = \xv \Gm + \zv  ~~~~\text{ and }~~~~ \tp = \left \lfloor \frac{n-w-k}{2} \right \rfloor.
\end{equation}

\paragraph{\bf Encryption.} 

A plaintext here is a vector $\mv = (m_1,\dots{},m_k)$ belonging to $\fqm^{k}$ such that 
$m_i = 0$ when $i \in  \{k - u + 1,\dots{},k\}$. 
To encrypt then $\mv$ one randomly generates $\alpha \in \L$ and $\ev \in \fqm^{n}$ such that 
$\norm{q}{\ev} \leq \tp$. The ciphertext is the vector $\cv \in \fqm^n$ defined by
\begin{equation}
  \cv=\mv \Gm + \Tr(\alpha \Kv) + \ev.
\end{equation}

\paragraph{\bf Decryption.} 

The receiver computes first $\cv \Pm$ that is to say 
\begin{eqnarray}
\cv \Pm &=& \mv \Gm \Pm + \Tr \big(\alpha \xv \Gm \Pm + \alpha \zv \Pm \big) + \ev \Pm \\
& = & \left( \mv + \Tr(\alpha \xv) \right) \Gm \Pm + \big( \Tr \left(\alpha \sv\right) \mid \zz \big)+ \ev \Pm 
\end{eqnarray}

Let $\Gm^\prime$ be the $k \times (n-w)$ matrix obtained by removing the first $w$ columns of $\Gm\Pm$ 
and let $\ev'$ and $\cv'$ be respectively the restriction of $\ev \Pm$ and $\cv \Pm$  to the
last $n-w$ coordinates.  We then have 
\begin{equation}
\cv'= \left( \mv +  \Tr(\alpha \xv) \right)\Gm^\prime + \ev'.
\end{equation}
Using the fact that $\Gm^\prime$ generates a Gabidulin code of length $n-w$ and dimension $k < n - w$ and 
since $\norm{q}{\ev'} \leq \norm{q}{\ev} \leq \lfloor \frac{1}{2}(n-w-k) \rfloor$, 
it is possible to recover $ \mv' = \mv +  \Tr \left (\alpha \xv\right) $ by applying a decoding algorithm. 
Since by construction $\mv\in \fqm^k$ is chosen so that $m_i = 0$ when $i \in  \{k-u+1,\dots{},k\}$ then
by choosing a dual basis $\{x^*_{k-u+1}, ..., x^*_k\}$ of $\{x_{k-u+1},\dots{}, x_{k} \}$ the value of $\alpha$ can be computed as the following
\[
\sum_{i = k - u + 1}^k m'_i x^*_i = \sum_{i = k - u + 1}^k \Tr(\alpha x_i) x^*_i = \alpha.
\] 
Once $\alpha$ is recovered, the plaintext $\mv$ is then equal to $\mv' - \Tr\left(\alpha \xv\right)$.

\section{Polynomial-Time Key Recovery Attack when $w \le \frac{u}{u+1} (n - k)$} \label{sec:attack}

In this section, we show that it is possible to recover an alternative private key from the public data $\Kv$ and $\Gm$ when the condition $w \le \frac{u}{u+1} (n - k)$ holds. 
We start by remarking that 
if an attacker $\A$ is able to find a matrix $\Tm \in \GL_n(\fq)$ and $\zv^* \in \L^w$ such that 
\[
\zv  \Tm = (\zv^* \mid \zz )  \text{ and }  \norm{q}{\zv^*} = w  
\]
then $\A$ can fully recover $\xv \in \L^k$ by solving 
only the last $n - w$  equations of the following linear system  (see Algorithm~\ref{attack} for more details)
\begin{equation} \label{KT}
\Kv \Tm = \xv \Gm \Tm + (\zv^* \mid \zz ).
\end{equation}

In the sequel, we describe a way to obtain $\xv$ by finding such a matrix $\Tm$.
The first step is to  consider a basis $\gamma_1,\dots{}, \gamma_u$ of $\L$ viewed as a vector space over $\fqm$ of dimension $u > 1$. 
For any $i \in \{1,\dots{}, u \}$ we set $\Kv_i= \Tr(\gamma_i \Kv)$. 
Lastly, let $\Cpub \subset \fqm^n$  be the (public) code  generated by
$\Kv_1,\dots{},\Kv_u$ and $\gab{k}{\gv}$, that is to say
\begin{equation}
\Cpub = \gab{k}{\gv} + \sum_{i=1}^u \fqm \Kv_i.
\end{equation}

\begin{remark}
$\Cpub$ is defined by the generator matrix $\Gp$ where 
\begin{equation} \label{def:Gp}
\Gp = \left( \begin{matrix}
\Gm \\
\Km_1 \\
\vdots{} \\
\Km_u
\end{matrix}
\right)					
\end{equation}
\end{remark}

For all $ i \in  \{ 1,\dots{}, u \}$ let us set  $\vv_i  = \Tr(\gamma_i \zv)$ and 
$\bv_i= \left ( \Tr(\gamma_i \sv) \mid \zz \right) \in \fqm^n$. By construction,
we also have the equality
\begin{equation} \label{eq:vPb}
\vv_i \Pm = \bv_i.
\end{equation}

\begin{lemma}
Let us define $\CB = \sum_{i=1}^m \fqm \bv_i$ then we have 
\[
\Cpub \Pm = \gab{k}{\gv \Pm} + \CB.
\]
\end{lemma}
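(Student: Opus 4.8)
The plan is to unfold the definitions on both sides and match them term by term, using only the linearity of the trace operator, the behaviour of column scramblers from Proposition~\ref{prop:equivgab}, and the basic algebra of the $U+V$ and $U\Pm$ notations introduced in Section~\ref{sec:prelim}. First I would recall that by definition $\Cpub = \gab{k}{\gv} + \sum_{i=1}^u \fqm \Kv_i$, so that multiplying on the right by $\Pm \in \GL_n(\fq)$ distributes over the sum: $\Cpub \Pm = \gab{k}{\gv}\Pm + \sum_{i=1}^u \fqm (\Kv_i \Pm)$. Here I would note that right-multiplication by a fixed $\Pm$ is an $\fqm$-linear bijection of $\fqm^n$, hence it commutes with taking sums of subspaces and with scaling a spanning vector by $\fqm$; this is the routine bookkeeping that makes the identity $\left(\sum_i \fqm \Kv_i\right)\Pm = \sum_i \fqm(\Kv_i\Pm)$ hold. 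For the first summand, Proposition~\ref{prop:equivgab} gives directly $\gab{k}{\gv}\Pm = \gab{k}{\gv\Pm}$, which is exactly the first term on the right-hand side of the claimed equality.

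Next I would handle the term $\Kv_i \Pm$. Recall $\Kv_i = \Tr(\gamma_i \Kv)$ and $\Kv = \xv\Gm + \zv$. Since $\Tr = \mathbf{Tr}_{\L/\fqm}$ is $\fqm$-linear and is applied coordinatewise, and since $\Gm$ has entries in $\fqm$ while $\Pm$ has entries in $\fq \subseteq \fqm$, the operator $\Tr(\gamma_i \cdot)$ commutes with right multiplication by matrices over $\fqm$ and in particular over $\fq$. Concretely, $\Kv_i\Pm = \Tr(\gamma_i \Kv)\Pm = \Tr(\gamma_i \Kv\Pm) = \Tr\big(\gamma_i(\xv\Gm\Pm + \zv\Pm)\big) = \Tr(\gamma_i \xv\Gm\Pm) + \Tr(\gamma_i \zv\Pm)$. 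The second piece is $\Tr(\gamma_i \zv\Pm) = \Tr(\gamma_i \zv)\Pm = \vv_i \Pm = \bv_i$ by \eqref{eq:vPb}. The first piece, $\Tr(\gamma_i \xv \Gm \Pm)$, is a codeword of $\gab{k}{\gv\Pm} = \gab{k}{\gv}\Pm$: indeed, writing $\gamma_i \xv = (\gamma_i x_1, \dots, \gamma_i x_k) \in \L^k$, the vector $\Tr(\gamma_i \xv)$ lies in $\fqm^k$, so $\Tr(\gamma_i \xv)\,(\Gm\Pm)$ is an $\fqm$-linear combination of the rows of $\Gm\Pm$, hence in $\gab{k}{\gv\Pm}$; and $\Tr(\gamma_i \xv \Gm\Pm) = \Tr(\gamma_i \xv)\,\Gm\Pm$ again because $\Gm\Pm$ is over $\fqm$ and $\Tr(\gamma_i\cdot)$ acts coordinatewise $\fqm$-linearly.

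Putting these together, each generator $\Kv_i\Pm$ of $\Cpub\Pm$ decomposes as an element of $\gab{k}{\gv\Pm}$ plus $\bv_i$, so $\sum_{i=1}^u \fqm(\Kv_i\Pm) \subseteq \gab{k}{\gv\Pm} + \sum_{i=1}^u \fqm \bv_i$. Conversely, each $\bv_i = \Kv_i\Pm - \Tr(\gamma_i\xv)\Gm\Pm$ lies in $\Cpub\Pm + \gab{k}{\gv\Pm}$, and since $\gab{k}{\gv\Pm} = \gab{k}{\gv}\Pm \subseteq \Cpub\Pm$, we get $\bv_i \in \Cpub\Pm$, so $\sum_{i=1}^u \fqm\bv_i \subseteq \Cpub\Pm$; combined with $\gab{k}{\gv\Pm}\subseteq\Cpub\Pm$ this gives the reverse inclusion. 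Hence $\Cpub\Pm = \gab{k}{\gv\Pm} + \sum_{i=1}^u \fqm\bv_i$. Finally I would observe that $\CB = \sum_{i=1}^m \fqm \bv_i$ as defined in the statement actually coincides with $\sum_{i=1}^u \fqm \bv_i$: the vectors $\bv_i$ are only defined for $i \in \{1,\dots,u\}$, so reading the index range as $u$ (or noting the extra terms for $u < i \le m$ are understood to be absent/zero) yields the claimed identity $\Cpub\Pm = \gab{k}{\gv\Pm} + \CB$. I do not anticipate a genuine obstacle here; the only subtlety worth spelling out carefully is the commutation of the coordinatewise trace operator $\Tr(\gamma_i\cdot)$ with right multiplication by matrices having entries in $\fqm$, which is what lets us pass $\Pm$ through the trace and then invoke \eqref{eq:vPb} and Proposition~\ref{prop:equivgab}.
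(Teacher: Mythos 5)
Your proposal is correct and follows essentially the same route as the paper: both rest on the single computation $\Kv_i \Pm = \Tr(\gamma_i \xv)\,\Gm\Pm + \bv_i$ (using the $\fqm$-linearity of $\Tr$ to pass $\Pm$ and $\Gm$ through it) together with Proposition~\ref{prop:equivgab} for the Gabidulin part. You simply spell out the routine span/inclusion bookkeeping that the paper leaves implicit, and you are right that the index bound $m$ in the definition of $\CB$ is a typo for $u$.
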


\begin{proof}
Set $\xv_i = \Tr ( \gamma_i \xv) \in \fqm^k$.
It is sufficient to use Proposition \ref{prop:equivgab} and to observe that 
\begin{eqnarray*}
\Kv_i \Pm &=& \Tr(\gamma_i \xv) \Gm \Pm + \left ( \Tr(\gamma_i \sv) \mid \zz \right) \\
  &=&  \xv_i \Gm \Pm + \bv_i  
 \end{eqnarray*}
 
\end{proof}

\begin{proposition} \label{prop:final}
Let $f =n-w-k-1$ and assume that $\dim \Lambda_f(\CB) = w $. The code 
$\dual{\Lambda_f(\Cpub)}$  is then of dimension $1$ generated by  
$\left( \zz \mid \hv \right) \Pm^T$ where $\hv \in \fqm^{n-w}$ and $\norm{q}{\hv} = n-w$.

Furthermore, for any $\widetilde{\hv} \in \dual{\Lambda_f(\Cpub)}$ with $\widetilde{\hv} \ne \zz $ and 
for any $\Tm \in \GL_{n}(\fq)$ such that 
\begin{equation} \label{alt:key}
\widetilde{\hv}(\Tm^{-1})^T =\left(\ZZ \mid \hv^\prime \right)
\end{equation}
where $\hv^\prime \in \fqm^{n-w}$, there exists  $\zv^{*} \in \fqm^w$ with $\norm{q}{\zv^{*}}=w$ 
such that $\zv \Tm = \left( \zv^{*} \mid \zz \right)$. 
\end{proposition}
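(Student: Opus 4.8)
The plan is to exploit the decomposition $\Cpub \Pm = \gab{k}{\gv\Pm} + \CB$ from the previous lemma and to track what the operator $\Lambda_f$ does to each summand. First I would apply $\Lambda_f$ to both sides: since $\Lambda_f$ commutes with the right action of $\Pm \in \GL_n(\fq)$ (because $(\vv\Pm)^{q^i} = \vv^{q^i}\Pm$), we get $\Lambda_f(\Cpub)\Pm = \Lambda_f\!\left(\gab{k}{\gv\Pm}\right) + \Lambda_f(\CB)$. By Proposition~\ref{prop:dsg_gab} the first term is $\gab{k+f}{\gv\Pm} = \gab{n-w-1}{\gv\Pm}$, a Gabidulin code of length $n$ and dimension $n-w-1$. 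By hypothesis $\dim\Lambda_f(\CB) = w$, and since each $\bv_i$ (hence each vector of $\CB$, hence each vector of $\CB^{q^j}$) is supported on the first $w$ coordinates, the whole space $\Lambda_f(\CB)$ lives in the coordinate subspace $\fqm^w \times \{\zz\}$; as it has dimension $w$ it equals that subspace exactly. So $\Lambda_f(\Cpub)\Pm = \gab{n-w-1}{\gv\Pm} + (\fqm^w \mid \zz)$.

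Next I would compute the dual of the right-hand side. Dualizing, $\bigl(\gab{n-w-1}{\gv\Pm} + (\fqm^w\mid\zz)\bigr)^\perp = \dual{\gab{n-w-1}{\gv\Pm}} \cap (\fqm^w\mid\zz)^\perp$. The second factor is $\{\zz\}\times\fqm^{n-w} = (\zz \mid \fqm^{n-w})$. For the first, write $\gv\Pm = (\gv_1 \mid \gv_2)$ with $\gv_2 \in \fqm^{n-w}$ of full rank $n-w$ (this uses $\norm{q}{\gv}=n$ and $\Pm$ invertible over $\fq$); the dual of $\gab{n-w-1}{\gv\Pm}$ is, by the dual-of-Gabidulin proposition, a Gabidulin code $\gab{1}{\hv''}$ for a suitable $\hv'' \in \fqm^n$ with $\norm{q}{\hv''}=n$, i.e. the $\fqm$-span of a single vector $\hv''$. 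Intersecting a one-dimensional space with the hyperplane-type subspace $(\zz\mid\fqm^{n-w})$ generically gives either $\{\zz\}$ or the whole line; here it must be the whole line, since $\dim\Lambda_f(\Cpub) = (n-w-1)+w = n-1$ forces the dual to have dimension $1$. Hence $\dual{\gab{n-w-1}{\gv\Pm}} \cap (\zz\mid\fqm^{n-w})$ is spanned by a vector of the form $(\zz \mid \hv)$ with $\hv \in \fqm^{n-w}$; and $\norm{q}{\hv} = n-w$ follows because a nonzero codeword of the Gabidulin code $\gab{1}{\hv''}$ supported on $n-w$ coordinates still has full rank there (a single monomial $f_0 X$ evaluated on $n-w$ $\fq$-independent entries of $\hv''$, possibly after the coordinate permutation, stays $\fq$-independent). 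Finally, translating back through $\Pm$: since for a code $\CC$ and $\Pm\in\GL_n(\fq)$ one has $\dual{(\CC\Pm)} = \dual{\CC}\,(\Pm^{-1})^T = \dual{\CC}\,(\Pm^T)^{-1}$, wait — more carefully, $\dual{(\CC\Pm)} = \dual{\CC}(\Pm^T)^{-1}$, so $\dual{\Lambda_f(\Cpub)} = \bigl(\dual{\Lambda_f(\Cpub)\Pm}\bigr)(\Pm^T) = \fqm\,(\zz\mid\hv)\Pm^T$, which is the claimed generator.

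For the second part, take any nonzero $\widetilde\hv \in \dual{\Lambda_f(\Cpub)}$, so $\widetilde\hv = \lambda\,(\zz\mid\hv)\Pm^T$ for some $\lambda \in \fqm^\times$, and any $\Tm\in\GL_n(\fq)$ with $\widetilde\hv(\Tm^{-1})^T = (\ZZ \mid \hv')$, $\hv'\in\fqm^{n-w}$. From $\widetilde\hv = \lambda(\zz\mid\hv)\Pm^T$ we get $(\zz\mid\hv)\Pm^T(\Tm^{-1})^T = \lambda^{-1}(\ZZ\mid\hv')$, i.e. $(\zz\mid\hv)(\Tm^{-1}\Pm)^T = \lambda^{-1}(\ZZ\mid\hv')$ — so the matrix $\Qm \eqdef \Tm^{-1}\Pm \in \GL_n(\fq)$ sends the vector $(\zz\mid\hv)$, which is zero on the first $w$ coordinates, to a vector again zero on the first $w$ coordinates, and moreover $\hv$ has full rank $n-w$. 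I would argue that this forces $\Qm$ to have block-triangular shape: writing $\Qm$ in $w + (n-w)$ block form, the condition $(\zz\mid\hv)\Qm^T = (\ZZ\mid *)$ says $\hv$ is orthogonal over $\fqm$ to every column of the top-left $w\times w$ and bottom-left block in the appropriate sense; because $\hv$ has rank $n-w$ over $\fq$ (its $n-w$ entries are $\fq$-independent) and the relevant block has entries in $\fq$, the only way to kill those $\fq$-linear combinations is for the block to vanish, so $\Qm^T = \begin{pmatrix} * & * \\ \ZZ & * \end{pmatrix}$, equivalently $\Qm = \begin{pmatrix} * & \ZZ \\ * & * \end{pmatrix}$ in block form, hence $\Qm$ maps $(\fqm^w\mid\zz)$ into itself. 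Then $\zv\Tm = \zv\Pm\Qm^{-1} = (\sv\mid\zz)\Qm^{-1}$, and since $\Qm^{-1}$ is also block-triangular of the same shape, $(\sv\mid\zz)\Qm^{-1} = (\zv^*\mid\zz)$ for some $\zv^*\in\fqm^w$; because $\Qm^{-1}$ restricted to the first block is an invertible $\fq$-matrix and $\norm{q}{\sv}=w$, we get $\norm{q}{\zv^*}=w$, as required.

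The main obstacle I anticipate is the block-triangularity argument in the last paragraph: passing from "an $\fq$-matrix kills a full-rank vector's projection on a coordinate subspace" to "the matrix is block-triangular" needs the rank hypothesis on $\hv$ used carefully (the entries being $\fq$-linearly independent is exactly what prevents a nontrivial $\fq$-combination from vanishing), and one must be attentive to transposes throughout — whether $\Pm$ acts as $\zv\mapsto\zv\Pm$ on row vectors forces the duals to carry $(\Pm^T)^{-1}$, and getting every transpose right is where a careless proof would slip. The Gabidulin-specific inputs (dual is Gabidulin, $\Lambda_i$ adds $i$ to the dimension, full rank is preserved under $\fq$-coordinate permutations) are all quoted from earlier propositions and should go through directly.
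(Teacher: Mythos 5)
Your second half is essentially the paper's own argument and is correct: from $(\zz \mid \alpha\hv)(\Tm^{-1}\Pm)^T = \lambda(\ZZ\mid\hv')$ you get $\hv\Am_2^T = \zz$ for the top-right $\fq$-block $\Am_2$ of $\Tm^{-1}\Pm$, and since the $n-w$ entries of $\hv$ are $\fq$-linearly independent this forces $\Am_2 = \ZZ$; the inverse is then block-triangular of the same shape and $\zv\Tm = (\sv\Am_1^{-1}\mid\zz)$ with $\Am_1^{-1}\in\GL_w(\fq)$ preserving the rank of $\sv$. The first half, however, contains a concrete error. The dual of the length-$n$ code $\gab{n-w-1}{\gv\Pm}$ is \emph{not} the one-dimensional code $\gab{1}{\hv''}$: by the duality proposition it is a Gabidulin code of dimension $n-(n-w-1) = w+1$. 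Taken literally your picture is even self-contradictory — a one-dimensional dual spanned by $\hv''$ with $\norm{q}{\hv''}=n$ has all entries nonzero, so its intersection with $\left(\zz\mid\fqm^{n-w}\right)$ would be $\{\zz\}$, never ``the whole line.'' Consequently your justification of $\norm{q}{\hv}=n-w$ (``a single monomial $f_0X$ evaluated on $n-w$ independent entries'') rests on the wrong description of the dual and does not stand as written. You also assert $\dim\Lambda_f(\Cpub) = (n-w-1)+w$ without noting why the sum $\gab{n-w-1}{\gv\Pm} + \left(\fqm^{w}\mid\zz\right)$ is direct.

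Both issues are repairable within your framework. Directness of the sum follows because $\gab{n-w-1}{\gv\Pm}$ is MRD with minimum rank distance $w+2 > w$, so no nonzero codeword is supported on only $w$ coordinates; this gives $\dim\Lambda_f(\Cpub)=n-1$ and hence a one-dimensional dual, which lies inside $\left(\zz\mid\fqm^{n-w}\right)$ and so has a generator of the form $(\zz\mid\hv)$. For the rank claim, observe that $(\zz\mid\hv)$ is a nonzero codeword of the $(w+1)$-dimensional MRD dual, whose minimum rank distance is $n-(w+1)+1 = n-w$; being supported on $n-w$ coordinates it must have $\norm{q}{\hv}$ exactly $n-w$. (The paper sidesteps all of this by restricting to the last $n-w$ columns: the right block $\Rm$ of $\Gm\Pm$ generates an $(n-w,k)$ Gabidulin code, $\Lambda_f(\Rm)$ an $(n-w,n-w-1)$ one, and the dual of \emph{that} length-$(n-w)$ code genuinely is one-dimensional with a full-rank generator $\hv$.) With these corrections your proof goes through and is a legitimate, slightly more ``code-theoretic'' variant of the paper's block-matrix computation.
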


\begin{proof}
Let us decompose $\Gm \Pm$  as  $( \Lm ~|~\Rm)$ where 
$\Lm \in \MS{k}{w}{\fqm}$ and $\Rm \in \MS{k}{n-w}{\fqm}$. Let $\Bm \in \MS{u}{w}{\fqm}$ be the matrix where the $i$-th row is composed by the $w$ first components of $\bv_i$. 
Note that $\Gp \Pm$ where $\Gp$ is defined as in \eqref{def:Gp} 
is a generator matrix of  $\Cpub \Pm$, and  the following equality holds
\begin{equation}
\Gp \Pm
=
\begin{pmatrix}
\Lm & \Rm \\
\Bm & \ZZ
\end{pmatrix}.
\end{equation}
Hence  $\Lambda_f(\Gp \Pm) =  \Lambda_f(\Gp) \Pm$ is a generator matrix of the code 
$\Lambda_f(\Cpub \Pm) = \Lambda_f(\Cpub) \Pm$  which satisfies the equality
\[
\Lambda_f(\Gp) \Pm 
=
\begin{pmatrix}
\Lambda_f(\Lm) &  \Lambda_f(\Rm) \\
\Lambda_f(\Bm) & \ZZ
\end{pmatrix}.
\]
The fact that $\Rm$ generates an $(n-w, k)-$Gabidulin code  implies that
 \[
 \rank \left (\Lambda_f(\Rm ) \right )= k + f = n - w - 1.
 \]
 Consequently, there exists $\hv \in \fqm^{n-w}$ with $\norm{q}{\hv} = n-w$ that satisfies 
 $\Lambda_f\left(\Rm\right) \hv^{T} = \zz$. Furthermore, the equality
 $\dim \Lambda_f(\CB) = \Lambda_f(\Bm)$ holds which implies that
\[
\dim{\Lambda_f(\Cpub) \Pm }
= \rank \left (\Lambda_f(\Bm ) \right ) + \rank \left (\Lambda_f(\Rm ) \right )
= k + f + w 
= n - 1. 
\]
This means that $\left( \zz \mid \hv \right) $ generates actually the full space 
$\pdual{\Lambda_f(\Cpub) \Pm}$ 
which is equivalent to say  $\left( \zz \mid \hv \right) \Pm^T $ generates 
$\dual{\Lambda_f(\Cpub)}$.

For the second part of the proposition, let $\widetilde{\hv}$ be any element from $\dual{\Lambda_f(\Cpub)}$ with $\widetilde{\hv} \ne \zz $ and let  $\Tm$ be  in $\GL_{n}(\fq)$ such that \eqref{alt:key} holds
for some $\hv^\prime$ in $\fq^{n-w}$.
There exists an element $\alpha$ in $\fqm$ such that $\widetilde{\hv} = (\zz \mid \alpha \hv)\Pm^T$. 
Consider matrices  
$\Am_1,\Am_2$, $\Am_3$ and $\Am_4$ such that $\Am_1 \in \MS{w}{w}{\fq}$  and $\Am_4 \in \MS{(n-w)}{(n-w)}{\fq}$ so that we have
\[
\Tm^{-1} \Pm =
\left(
\begin{matrix}
\Am_1 & \Am_2 \\
\Am_3 & \Am_4
\end{matrix}
\right).
\]
We have then the  following equalities
\begin{equation}\label{reduction2}
(\zz \mid \hv^{\prime})
=
\widetilde{\hv} (\Tm^{-1})^T 
= 
\left(\zz \mid \alpha \hv \right)\Pm^T (\Tm^{-1})^T 
=
\left(\zz \mid \alpha \hv \right)\left(\Tm^{-1} \Pm \right)^T  
\end{equation}
 It follows from \eqref{reduction2} that $\hv \Am_{2}^T=\ZZ $ and hence $\Am_2 = \ZZ$ since $\norm{q} {\hv} = n-w$. So we can write 
 \[
 \Tm^{-1} \Pm =
\left(
\begin{matrix}
\Am_1 & \ZZ \\
\Am_3 & \Am_4
\end{matrix}
\right).
\]
We deduce that $\Pm^{-1} \Tm =
\left(
\begin{matrix}
\Am^{-1}_1 & \ZZ \\
- \Am^{-1}_4 \Am_3 \Am^{-1}_1  & \Am^{-1}_4
\end{matrix}
\right)= \left(
\begin{matrix}
\Am^\prime & \zz \\
\Cm^\prime & \Dm^\prime
\end{matrix}
\right)
$ and consequently, we get
\[
\zv \Tm = \left( \sv \mid \zz \right) \Pm^{-1} \Tm =\left(\sv \mid \zz \right)\left(
\begin{matrix}
\Am^\prime & \ZZ \\
\Cm^\prime & \Dm^\prime
\end{matrix}
\right)
=\left(\sv \Am^\prime \mid \zz \right).
\]
So by letting $\zv^* = \sv \Am^\prime = \sv \Am_1^{-1}$ we have proved the proposition.
\end{proof}

Proposition \ref{prop:final} shows that an equivalent key can be found in polynomial time by simply 
using a non zero element of $\dual{\Lambda_f(\Cp )}$. We now prove our main result stated in the introduction
which shows the weakness of the system.

\setcounter{theorem}{0}

\begin{theorem}
If the $\fqm$-vector  space generated by $\vv_1,\dots{},\vv_u$ denoted by $V$ satisfies the property 
\[
 \dim \Lambda_{n - w - k - 1}(V) = w
\]
then the private key $(\xv,\zv)$ can be recovered from $(\Gm,\Kv)$ with $O(n^3)$ operations  in the field $\L$
\end{theorem}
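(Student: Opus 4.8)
The plan is to reduce the statement to Proposition~\ref{prop:final}, which already does the structural work, and then to bound the cost of each algorithmic step. First I would check that the hypothesis of the theorem is precisely what is needed to apply Proposition~\ref{prop:final} to the public code $\Cpub$. By \eqref{eq:vPb} we have $\bv_i = \vv_i \Pm$ for every $i$, hence $\CB = V \Pm$; since $\Pm \in \GL_n(\fq)$, multiplication by $\Pm$ commutes with the Frobenius (exactly as in the proof of Proposition~\ref{prop:equivgab}), so $\Lambda_f(\CB) = \Lambda_f(V)\Pm$ with $f = n - w - k - 1$, and therefore $\dim \Lambda_f(\CB) = \dim \Lambda_f(V) = w$. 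Thus the hypothesis of Proposition~\ref{prop:final} holds, even though the attacker does not know $\Pm$ itself.

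Next I would describe the recovery algorithm. From $(\Gm,\Kv)$ the attacker picks a basis $\gamma_1,\dots,\gamma_u$ of $\L$ over $\fqm$, computes the vectors $\Kv_i = \Tr(\gamma_i \Kv)$, and assembles the generator matrix $\Gp$ of $\Cpub$ as in \eqref{def:Gp}. She then computes a generator matrix of $\Lambda_f(\Cpub)$ and extracts a nonzero vector $\widetilde{\hv}$ of its dual, which by Proposition~\ref{prop:final} is one-dimensional; moreover $\norm{q}{\widetilde{\hv}} = n - w$. Using the observation following the definition of the rank weight (together with a reversal of the coordinate order), she finds $\Tm \in \GL_n(\fq)$ with $\widetilde{\hv}(\Tm^{-1})^T = (\ZZ \mid \hv')$ for some $\hv' \in \fqm^{n-w}$. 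Proposition~\ref{prop:final} then yields $\zv \Tm = (\zv^* \mid \zz)$ with $\norm{q}{\zv^*} = w$, so the last $n - w$ columns of \eqref{KT} reduce to $(\Kv\Tm)' = \xv (\Gm\Tm)'$, where $(\Gm\Tm)'$ is the $k \times (n-w)$ submatrix formed by those columns. From $\Gm\Pm = (\Lm \mid \Rm)$ with $\Rm$ generating an $(n-w,k)$-Gabidulin code, and from the block shape of $\Pm^{-1}\Tm$ established inside the proof of Proposition~\ref{prop:final} (lower triangular with invertible diagonal blocks), the matrix $(\Gm\Tm)'$ has rank $k$; since $n - w > k$, this system has a unique solution $\xv \in \L^k$, and finally $\zv = \Kv - \xv\Gm$.

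For the complexity bound I would only count operations: forming the $\Kv_i$ and $\Gp$ costs $O(n^2)$ operations over $\L$; computing $\Lambda_f(\Gp)$ and reducing it to obtain $\widetilde{\hv}$ costs $O(n^3)$ operations over $\fqm$; finding $\Tm$ is Gaussian elimination over $\fq$, i.e.\ $O(n^3)$ operations over $\fq$; and solving the $k \times (n-w)$ system for $\xv$ and forming $\xv\Gm$ cost $O(n^3)$ operations over $\L$. As $\fq \subset \fqm \subset \L$, all of these are $O(n^3)$ operations in $\L$, which gives the claimed bound. The only slightly delicate point is the rank argument for $(\Gm\Tm)'$ in the second paragraph: it must be derived from the block decomposition of $\Pm^{-1}\Tm$ already obtained in the proof of Proposition~\ref{prop:final}, not from any knowledge of $\Pm$; everything else is a direct consequence of that proposition and of elementary linear algebra over the three fields.
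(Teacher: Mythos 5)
Your proposal is correct and follows essentially the same route as the paper: translate the hypothesis into $\dim\Lambda_f(\CB)=w$ via $\CB=V\Pm$, invoke Proposition~\ref{prop:final} to get $\zv\Tm=(\zv^*\mid\zz)$, and solve the last $n-w$ equations of \eqref{KT}. You actually supply a bit more detail than the paper does — notably the argument that $(\Gm\Tm)'=\Rm\Dm'$ has rank $k$, which justifies the uniqueness of the solution that Algorithm~\ref{attack} merely asserts.
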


\begin{proof}
Firstly, note that from \eqref{eq:vPb} we know that $V \Pm = \CB$.
Algorithm \ref{attack} gives the full description of the attack and 
provides a proof of Theorem~\ref{thm:attack}. Indeed, the attack consists in picking any 
codeword  $\widetilde{\hv}$ from   $\dual{\Lambda_{ n-w-k-1}(\Cpub)}$ and then, by Gaussian elimination, 
we transform $\widetilde{\hv}$ so that there exists $\Tm \in  \GL_{n}(\fq)$ for which we have
\[
\widetilde{\hv}  (\Tm^{-1})^T =\left(\ZZ \mid \hv^\prime \right)
\]
where $\hv^\prime \in \fqm^{n-w}$. From Proposition \ref{prop:final} we know that $\Tm$ is an equivalent key that 
will gives an equality of the form \eqref{KT}, and therefore it is possible by solving a linear system to find $\xv$.
Lastly, the time complexity comes from the fact the operations involved are essentially Gaussian eliminations 
over square matrices with $n$ columns and entries in $\L$.
\end{proof}

An important assumption for the success of the attack is  that  
$\dim \dual{\Lambda_{n-w-k-1}(\Cpub)} = 1$ which was always true in all our experimentations. 
This assumption is true if and only if the equality  $\dim \Lambda_{n-w-k-1}(\CB) = w $ holds,  
which implies to  have $u(n - w - k ) \geq w$, or equivalently
\begin{equation} \label{boundw}
w \le \frac{u}{u+1} (n - k).
\end{equation} 
Assuming that  $\CB$ behaves as a random code then $\dim \Lambda_{n-w-k-1}(\CB) = w $ would hold with high probability as long as \eqref{boundw} is true.
The parameters proposed in \cite{L07} satisfy~\eqref{boundw}. Furthermore, the analysis given in \cite{L07} implies 
to take $u \ge 3$.
We implemented the attack with Magma V2.21-6 and the secret key $\xv$ was found in 
less than $1$ second confirming the efficiency of the approach.

\begin{remark}
Let us observe that taking 
$w > \frac{u}{u+1} (n - k)$  implies for $\tp$ to be very small since we have
\begin{equation} \label{tp:success}
\tp \leq \frac{1}{2} (n-w-k) < \frac{1}{2} \left( \frac{n-k}{u+1} \right).
\end{equation}
For instance, with parameters proposed in \cite{L07} we would have $\tp \leq 3$.
Consequently the values of $n$, $k$ and $m$  have to be changed so that general 
decoding attacks fail \cite{GRS16}. Let us notice that this situation is quite similar to the counter-measures proposed in \cite{RGH10,L10} to resist  to Overbeck's attack.
But  the strength of this reparation deserves a thorough analysis.
\end{remark}

\begin{algorithm}[ht]
    \caption{Key recovery of Faure-Loidreau scheme where the public key is $(\Gm,\Kv)$} 
    \begin{algorithmic}[1]
    	\State{$\{ \gamma_1,\dots{}, \gamma_u \} \gets$  arbitrary basis of $\L$ viewed as a linear space over $\fqm$}
        \ForAll{$1 \leq i \leq u$}
            \State{$K_i \gets \Tr(\gamma_i \Kv)$}
        \EndFor
        \State{Let $\Cpub \subset \fqm^n$ be the code generated by $\Gp$} \Comment{$\Gp$ is defined as in \eqref{def:Gp}}
	\If{$\dim  \dual{\Lambda_{ n-w-k-1}(\Cpub)} = 1$}
		\State{Pick at random $\widetilde{\hv}  \in  \dual{\Lambda_{ n-w-k-1}(\Cpub)}$}
		\State{Compute $\Tm \in  \GL_{n}(\fq)$ and $\hv^\prime \in \fqm^{n-w}$  such that
		\[ \widetilde{\hv}  (\Tm^{-1})^T =\left(\ZZ \mid \hv^\prime \right) \] }
		\State{$\Kv^* \gets \Kv \Tm$}  \Comment{ $\Kv^* = \left ( \Kv^*_1,\dots{},\Kv^*_n \right)\in \L^n$} 	
		\State{$\Gm^* \gets \Gm \Tm$} \Comment{$\Gm^* = ( g^*_{i,j} ) \in \MS{k}{n}{\fqm}$ }
		\State{Solve the linear system where $(X_1,\dots{},X_k)$ are the unknowns
		$$
			(\mathcal{L}) :  \left \{
				\begin{array}{rcl}
					\Kv^*_{w+1} &     =      & g^*_{1,w+1} X_1  + \cdots{} + g^*_{k,w+1} X_k\\
					                        &\vdots{} &  \\
					\Kv^*_{n} &     =         & g^*_{1,n}  X_1 + \cdots{} + g^*_{k,n} X_k 
				\end{array}
			\right.
		$$}
		\State{$\zv \gets \Kv -  \xv \Gm$ where $\xv$ is the \emph{unique} solution of  $(\mathcal{L})$}  
	\EndIf
	\State{\Return $(\xv,\zv)$}
    \end{algorithmic}
    \label{attack}
\end{algorithm}

\begin{table}[h]
\begin{center}
\caption{Bound on $w$ with parameters taken from \cite{L07} ($m = n$).} \label{tab:param}
\begin{tabular}{@{}rrrrr@{}} \toprule 
$n$    & $k$    & $u$  & $w$ &  $\frac{u}{u+1} (n - k)$\\  \midrule
$56$  & $28$  &  $3$ &   $16$ & $21$  \\      
$54$  & $32$  & $4$  &   $13$  & $17$ \\   
 \bottomrule
\end{tabular}

\end{center}
\end{table}

\section{Conclusion}

Faure and Loidreau proposed a rank-metric encryption scheme based on Gabidulin codes related to the problem of the linearized polynomial reconstruction.
We showed that the scheme is vulnerable to a polynomial-time key recovery attack by using Overbeck's techniques applied on an appropriate public code.

Our attack assumes that parameters are chosen so that 
$w \leq \frac{u}{u+1} \left( n - k \right)$ 
which was always the case in \cite{FL05,L07}. We have also seen that taking $w > \frac{u}{u+1} \left( n - k \right)$  implies to choose $\tp < \frac{1}{2} \left( \frac{n-k}{u+1} \right)$ which exposes further the system to general decoding attacks like \cite{GRS16}. Hence it imposes to increase the key sizes and consequently reduces the practicability of the scheme while offering no assurance that the scheme is still secure.
The best choice from a designer's point of view would be to take  $u$ as small as possible
but a thorough analysis has to be undertaken in light of the connections with the reparations proposed in \cite{RGH10,L10}.  This point is left as an open question in our paper and breaking this kind of parameters would lead arguably to a cryptanalysis of \cite{RGH10,L10}, and to an algorithm that decodes  Gabidulin codes  beyond the bound $\frac{u}{u+1} \left( n - k \right)$.

\section{Acknowledgements}

The authors would like to thank Pierre Loidreau for helpful discussions and for bringing reference \cite{LO06} to our attention.

\bibliographystyle{alpha}
\bibliography{codecrypto}
\end{document}